\def\Includegraphics[#1]#2{{\tt[#1]#2}}
\let\epsilon\varepsilon
\def\eqref#1{{\rm(\ref{#1})}}
\newtheorem{proposition}{Proposition}
\newtheorem{definition}{Definition}
\newtheorem{remark}{Remark}
\def\dif{\mathop{}\!\mathrm d}
\def\ex{\mathrm e}
\begin{document}

\title{On multisoliton solutions of the constant astigmatism equation}
\author{Adam Hlav\'a\v{c}}
\address{Mathematical Institute in Opava, Silesian University in
  Opava, Na Rybn\'\i\v{c}ku 1, 746 01 Opava, Czech Republic.
  {\it E-mail}: Adam.Hlavac@math.slu.cz}
\date{}


\begin{abstract}
We introduce an algebraic formula producing infinitely many exact solutions of the constant astigmatism equation 
$ z_{yy} + ({1}/{z})_{xx} + 2 = 0 $
from a given seed. 
A construction of corresponding surfaces of constant astigmatism is then a matter of routine. 
As a special case, we consider multisoliton solutions of the constant astigmatism equation defined as counterparts of famous 
multisoliton solutions of the sine-Gordon equation. A few particular examples are surveyed as well.
\end{abstract}

\section{Introduction}
In this paper, we continue the investigation of 
the \it constant astigmatism equation (CAE) \rm 
$$\numbered\label{CAE}
z_{yy} + (\frac{1}{z})_{xx} + 2 = 0, 
$$
the Gauss equation of \it surfaces of constant astigmatism \rm immersed in Euclidean space. These surfaces are defined as having constant 
(but nonzero) difference between principal radii of curvature. 

Historical roots of the constant astigmatism surfaces can be traced back in works of the 19th century mathematicians 
\cite{Bia a, Bia b, Bia I, Bia II, Ri, Lip, vLil}, though the surfaces were nameless at that time. For more detailed history we refer to, 
e.g., \cite{P-S}. 

In 2009, after about a century in oblivion, the subject of constant astigmatism surfaces 
has been resurrected by Baran and Marvan in the work \cite{B-M I} concerning the systematic search for integrable classes of Weingarten surfaces. In the paper, the surfaces gained their name and Equation \eqref{CAE} was obtained as well. 
Recently, the Equation \eqref{CAE} has been examined by several authors \cite{H-M I, P-Z, H-M II, M-P, H-M III}. 

The main result of this paper is Proposition~\ref{prop1}, where
an algebraic formula \eqref{prop1eq} provides arbitrary many solutions of Equation \eqref{CAE} from a given seed. 
The proof is based on the observation that superposition formulas \cite[Eq.~21]{H-M I} (see also equations \eqref{supCAE} below) 
can be conveniently written in a matrix form and, therefore, their iteration coincides with the composition of linear transformations, 
i.e. matrix multiplication. Consequently, the same conclusion holds in the case of multisoliton solutions, 
i.e. solutions of the CAE having their counterparts in the well known multisoliton solutions of the sine-Gordon equation. 
Since the algebraic formula for the $n$-soliton sine-Gordon solution is well known, the corresponding solution of the CAE 
can be routinely computed using Proposition~\ref{propSoliton}. 

The results of Proposition~\ref{prop1} in combination with \cite[Prop.~3]{H-M I} also enable us 
to construct arbitrarily many constant astigmatism surfaces by purely algebraic manipulations and differentiation 
once an initial step (including an integration) is successfully performed.  

The most important results from the earliest history of the subject of constant astigmatism surfaces were, 
undoubtedly, obtained by Bianchi \cite{Bia a, Bia b, Bia I, Bia II}. 
He showed (see also \cite{Ri}) that evolutes (focal surfaces) of constant astigmatism surfaces are pseudospherical, i.e. with constant negative Gaussian curvature. Conversely, if one equips a pseudospherical surface with parabolic geodesic coordinates and takes the corresponding involutes, then they are of constant astigmatism. Bianchi also succeeded in finding some of the constant astigmatism surfaces explicitly \cite[Eq.~(30)]{Bia a}.    

A remarkable class of constant astigmatism surfaces was studied by Lipschitz \cite{Lip} 
and its subclass was later investigated by von~Lilienthal \cite{vLil}.   
Lipschitz parameterised his surfaces by spherical coordinates of the Gaussian image, see Figure~\ref{Lipsurf}.   
Recently, \cite{H-M II}, 
we showed that the solutions of the constant astigmatism equation that correspond
to the Lipschitz class of surfaces, are the Lie
symmetry invariant solutions and constitute a four-dimensional manifold. 
The counterpart sine-Gordon solutions are
shown to be Lie symmetry invariant as well. 
\begin{figure}[ht] 
\begin{center} 
\includegraphics[scale=0.17]{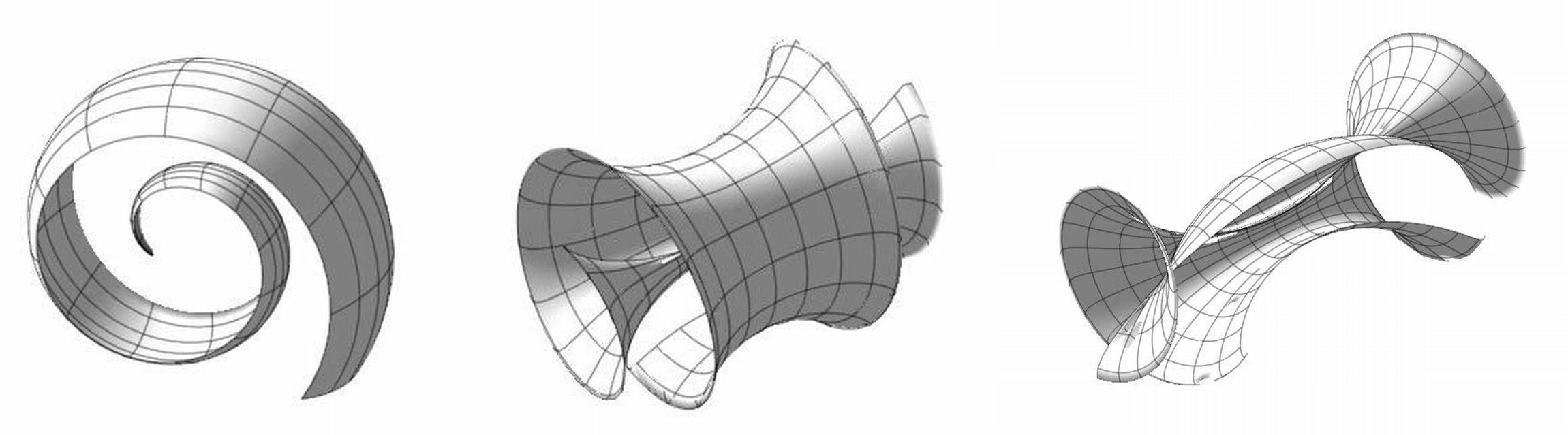}
\caption{Lipschitz surfaces of constant astigmatism.}\label{Lipsurf}
\end{center}
\end{figure} 

The aforementioned geometric link to pseudospherical surfaces served as a tool for 
deriving a nonlocal transformation between Equation \eqref{CAE} and the well-known sine-Gordon equation 
$$\numbered\label{SG}
\omega_{\xi\eta} = \frac12 \sin 2\omega,
$$
which is the Gauss equation of pseudospherical surfaces parameterised by asymptotic coordinates. For detailed description of the transformations see \cite{B-M I}.   

By an \it orthogonal equiareal pattern \rm (Sadowsky~\cite{Sad I,Sad II})  
we mean a parameterisation $x,y$ such that the corresponding metric is of the form
$$
z \mathrm{d} x^2 + \frac{1}{z} \mathrm{d} y^2,
$$
$z$ being an arbitrary function of $x,y$.
An associated {\it slip line field} is a parameterization $\xi,\eta$ such that the angle between
$\partial_x$ and $\partial_\xi$ as well as the angle between 
$\partial_y$ and $\partial_\eta$ is equal to $\pi/4$.
In \cite{H-M I} we observed that to every surface of constant astigmatism 
there corresponds an orthogonal equiareal pattern on the Gaussian sphere. Note that 
the same result was obtained by Bianchi~\cite[\S375, eq.~(20)]{Bia II} in the context of pseudospherical congruences.
We also showed that solutions of the sine-Gordon equation \eqref{SG} 
correspond to slip line fields on the same Gaussian sphere.
 
The method of \cite{H-M I} for generating solutions of the constant astigmatism equations and corresponding surfaces of constant astigmatism  has its origin in 
B\"acklund transformation for the sine-Gordon equation. 
Let $\omega$ be a solution of the sine-Gordon equation \eqref{SG}. Its B\"acklund transformation, 
$\omega^{(\lambda)}$,  
is given by the system
$$\numbered\label{BT}
\omega^{(\lambda)}_\xi  = \omega_\xi + \lambda\sin(\omega^{(\lambda)} + \omega), \qquad
\omega^{(\lambda)}_\eta = -\omega_\eta + \frac{1}{\lambda}\sin(\omega^{(\lambda)} - \omega),
$$
$\lambda$ being called a \it B\"acklund parameter. \rm
The famous Bianchi permutability theorem~\cite{Bia b}, see also \cite{Bia II, R-S}, 
and the \it superposition formula \rm
$$\numbered\label{supSG}
\tan\frac{\omega^{(\lambda_1\lambda_2)} - \omega}{2} = \frac{\lambda_1 + \lambda_2}{\lambda_1 - \lambda_2} \tan\frac{\omega^{(\lambda_1)} - \omega^{(\lambda_2)}}{2}
$$ 
enable us to compute solution $\omega^{(\lambda_1\lambda_2)}$, the B\"acklund transformation of $\omega^{(\lambda_1)}$
with B\"acklund parameter $\lambda_2$, by algebraic manipulations. 
Therefore, when a seed solution $\omega$ is given, the integration of the system \eqref{BT} needs to be done only once and the further iteration of the B\"acklund transformation can be performed purely algebraically.

Pairs of sine-Gordon solutions related by B\"acklund transformation induce solutions of the constant astigmatism equation.
In \cite[Def.~4]{H-M I} we defined the \it associated potentials \rm $f^{(\lambda)}, x^{(\lambda)}, y^{(\lambda)} $
corresponding to a pair of sine-Gordon solutions $\omega, \omega^{(\lambda)}$. In terms of
$
g^{(\lambda)} = `e^{f^{(\lambda)}}
$
the associated potentials are given by 
$$\numbered\label{aspot}
g^{(\lambda)}_\xi = g^{(\lambda)}\lambda \cos(\omega^{(\lambda)} + \omega),
\qquad 
g^{(\lambda)}_\eta = g^{(\lambda)}\frac1\lambda \cos(\omega^{(\lambda)} - \omega),
\\
x^{(\lambda)}_\xi = \lambda g^{(\lambda)} \sin(\omega^{(\lambda)} + \omega),
\qquad
x^{(\lambda)}_\eta = \frac1\lambda g^{(\lambda)} \sin(\omega^{(\lambda)} - \omega),
\\
y^{(\lambda)}_\xi = \frac{\lambda \sin(\omega^{(\lambda)} + \omega)}{g^{(\lambda)}}  ,
\qquad
y^{(\lambda)}_\eta = - \frac{\sin(\omega^{(\lambda)} - \omega)}{\lambda g^{(\lambda)}} .
$$
Expressing $z^{(\lambda)} = 1/{g^{(\lambda)}}^2 $ in terms of $x^{(\lambda)}$ and $y^{(\lambda)}$ one obtains a solution of the CAE. 

In the same paper (Prop. 5) we succeeded in extending the superposition principle~\eqref{supSG} 
to solutions of the CAE. Let $\omega, \omega^{(\lambda_1)}, \omega^{(\lambda_2)}, 
 \omega^{(\lambda_1\lambda_2)}$ be four sine-Gordon solutions  
related by the Bianchi superposition principle~\eqref{supSG}.
Then the associated potentials 
$g^{(\lambda_1\lambda_2)}$, $x^{(\lambda_1\lambda_2)}$, $y^{(\lambda_1\lambda_2)}$
corresponding to the pair $\omega^{(\lambda_1)},\omega^{(\lambda_1\lambda_2)}$
are related to the associated potentials 
$g^{(\lambda_2)}$, $x^{(\lambda_2)}$, $y^{(\lambda_2)}$ corresponding to the pair 
$\omega,\omega^{(\lambda_2)}$ by formulas
$$\numbered\label{supCAE}
g^{(\lambda_1\lambda_2)} = 
\frac{-\lambda_1\lambda_2}{\lambda_1^2+\lambda_2^2-2\lambda_1\lambda_2\cos(\omega^{(\lambda_1)} - \omega^{(\lambda_2)})}g^{(\lambda_2)},
\\
x^{(\lambda_1\lambda_2)}
 = \frac{\lambda_1 \lambda_2}{\lambda_1^2 - \lambda_2^2} 
   (x^{(\lambda_2)}
     -  
      \frac{2 \lambda_1 \lambda_2 \sin(\omega^{(\lambda_1)} - \omega^{(\lambda_2)})}
         {\lambda_1^2 + \lambda_2^2
          - 2 \lambda_1 \lambda_2 \cos(\omega^{(\lambda_1)} - \omega^{(\lambda_2)})
          } g^{(\lambda_2)}),
\\
y^{(\lambda_1\lambda_2)}
 = \frac{\lambda_1^2 - \lambda_2^2}{\lambda_1\lambda_2} y^{(\lambda_2)}
   - \frac{2 \sin(\omega^{(\lambda_1)} - \omega^{(\lambda_2)})}{g^{(\lambda_2)}}.
$$
The above formulas can be regarded as a starting point of our approach in this paper.

The contents of this paper are as follows. 
In Section~2 we observe that equations~\eqref{supCAE} can be conveniently
exposed in matrix form and the iteration of the superposition principle for the CAE reduces to mere matrix multiplication. 
In Section~3 we handle the multisoliton case, i.e. we deal with solutions of the CAE coming from the well known $n$-soliton solutions of the sine-Gordon equation. 
Section~4 deals with constant astigmatism surfaces and slip line fields. 
In Section~5 we study in detail the $n$-soliton case for $n = 1,2,3$ producing exact solutions of the CAE, constant astigmatism surfaces and slip line fields, whilst
the last section is devoted to the subcase when all B\"acklund parameters are equal to $1$.    

\section{Solutions of the CAE}
Let $\omega^{[0]} = \bar{\omega}^{[0]} $ be some seed solution of the sine-Gordon equation. 
Fix B\"acklund parameters $\lambda_1, \dots, \lambda_{k + 1}$ and let us denote
$$\numbered\label{defk}
\omega^{[k]} = \omega^{(\lambda_1\lambda_2\ldots\lambda_k)}, \qquad
\bar{\omega}^{[k]} = \omega^{(\lambda_2\lambda_3\ldots\lambda_{k+1})},
$$
see the diagram (a part of the well known Bianchi lattice)
\begin{displaymath} \numbered \label{lattice}
    \xymatrix{ 
\omega^{[0]} \ar[r]^{\lambda_2} \ar[d]_{\lambda_1} 	& \bar{\omega}^{[1]} \ar[r]^{\lambda_3} \ar[d]_{\lambda_1} & \bar{\omega}^{[2]} \ar[d]_{\lambda_1} \ar[r]^{\lambda_4}  & \bar{\omega}^{[3]} \ar[d]_{\lambda_1} \ar[r]^{\lambda_5}  & \bar{\omega}^{[4]} \ar[d]_{\lambda_1} \ar[r]^{\lambda_6} & \dots
\\
\omega^{[1]} \ar[r]^{\lambda_2}  & 	\omega^{[2]}  \ar[r]^{\lambda_3} & 	\omega^{[3]} \ar[r]^{\lambda_4} & 	\omega^{[4]} \ar[r]^{\lambda_5} & 	\omega^{[5]} \ar[r]^{\lambda_6} & \dots
}
\end{displaymath}
Let $g^{[j]}, x^{[j]}, y^{[j]}$ denote the associated potentials \eqref{aspot} corresponding to the pair $\bar{\omega}^{[j-1]}, \omega^{[j]}$. 
In this notation, the superposition formulas \eqref{supCAE} 
turn out to be 
$$\numbered\label{difeq}
x^{[j+1]}
 = \frac{\lambda_{j+1} \lambda_1}{\lambda_{j+1}^2 - \lambda_1^2} 
   (x^{[j]}
     -  
      \frac{2 \lambda_{j+1} \lambda_1  \sin(\bar{\omega}^{[j]} - \omega^{[j]})}
         {\lambda_{j+1}^2 + \lambda_1^2
          - 2 \lambda_{j+1} \lambda_1 \cos(\bar{\omega}^{[j]} - \omega^{[j]})
          } g^{[j]} ),
\\
y^{[j+1]}
 = \frac{\lambda_{j+1}^2 - \lambda_1^2}{\lambda_{j+1}\lambda_1} y^{[j]}
   - \frac{2 \sin(\bar{\omega}^{[j]} - \omega^{[j]})}{g^{[j]}},
\\
g^{[j+1]} = \frac{-\lambda_{j+1}\lambda_1}{\lambda_{j+1}^2+\lambda_1^2-2\lambda_{j+1}\lambda_1\cos(\bar{\omega}^{[j]} - \omega^{[j]})}g^{[j]}.
$$
The above formulas constitute recurrence relations for the quantities $x^{[n]}, y^{[n]}, g^{[n]}$ with the initial conditions
$$ \numbered\label{ic}
x^{[1]} = x_1, \quad y^{[1]} = y_1, \quad g^{[1]} = g_1. 
$$
\begin{proposition}\label{prop1}
Let $x_1, y_1, g_1$ be the associated potentials corresponding to the pair $\omega^{[0]}, \omega^{[1]}$ of sine-Gordon solutions.
Let $S^{[j]}$ be $4\times 4$ matrices with entries defined by formulas 
$$\numbered\label{Smatrix}
S^{[j]}_{11} = \frac{\lambda_{j+1} \lambda_1}{\lambda_{j+1}^2 - \lambda_1^2}, \quad
S^{[j]}_{13} = -\frac{\lambda_{j+1}^2 \lambda_1^2}{\lambda_{j + 1}^2 - \lambda_1^2} \cdot \frac{2  \sin(\bar{\omega}^{[j]} - \omega^{[j]})}
{\lambda_{j+1}^2 + \lambda_1^2 - 2 \lambda_{j+1} \lambda_1 \cos(\bar{\omega}^{[j]} - \omega^{[j]})}, \\
S^{[j]}_{22} = \frac{\lambda_{j+1}^2 - \lambda_1^2}{\lambda_{j+1}\lambda_1}, \quad
S^{[j]}_{24} = - 2 \sin(\bar{\omega}^{[j]} - \omega^{[j]}), \\
S^{[j]}_{33} = \frac{1}{S^{[j]}_{44}} = \frac{-\lambda_{j+1}\lambda_1}{\lambda_{j+1}^2+\lambda_1^2-2\lambda_{j+1}\lambda_1\cos(\bar{\omega}^{[j]} - \omega^{[j]})} 
$$
all the other entries being zero. Let 
$$\numbered\label{prop1eq}
(
\begin{array}{c}
x^{[n]} \\
y^{[n]} \\
g^{[n]} \\
1/g^{[n]} 
\end{array}
)
=
\prod_{i = 1}^{n - 1} S^{[i]} 
(
\begin{array}{c}
x_1 \\
y_1 \\
g_1 \\
1/g_1 
\end{array}
).
$$
Then $x^{[n]}, y^{[n]}, g^{[n]}$ are the associated potentials corresponding to the pair $\bar{\omega}^{[n-1]}, \omega^{[n]}$. Moreover, 
if $z^{[n]} = 1/{g^{[n]}}^2 $, then $z^{[n]}(x^{[n]},y^{[n]})$ is a solution of the constant astigmatism equation \eqref{CAE}.
\end{proposition}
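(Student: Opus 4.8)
The plan is to prove the first assertion by induction on $n$ and then to read off the second assertion from the construction of associated potentials. The content of \eqref{prop1eq} is twofold: that the recurrence \eqref{difeq} reproduces the associated potentials of the successive pairs $\bar\omega^{[j-1]},\omega^{[j]}$, and that this recurrence is linear in a suitably augmented set of variables. I would treat these two points separately.

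For the first point, the base case $n=1$ is the initial condition \eqref{ic}, since $\bar\omega^{[0]}=\omega^{[0]}$ and, by hypothesis, $x_1,y_1,g_1$ are the associated potentials of $\omega^{[0]},\omega^{[1]}$. For the inductive step I would check that the three lines of \eqref{difeq} are exactly the instance of the superposition principle \eqref{supCAE} attached to the elementary square at level $j$ of the lattice \eqref{lattice}, in which the vertical edges carry the parameter $\lambda_1$ and the horizontal edge carries $\lambda_{j+1}$. Concretely, one identifies the generic quadruple $\omega,\omega^{(\lambda_1)},\omega^{(\lambda_2)},\omega^{(\lambda_1\lambda_2)}$ of \eqref{supCAE} with $\bar\omega^{[j-1]},\bar\omega^{[j]},\omega^{[j]},\omega^{[j+1]}$ and the formal parameters $(\lambda_1,\lambda_2)$ with $(\lambda_{j+1},\lambda_1)$; the difference $\omega^{(\lambda_1)}-\omega^{(\lambda_2)}$ then becomes $\bar\omega^{[j]}-\omega^{[j]}$, and the three relations of \eqref{supCAE} turn verbatim into the three lines of \eqref{difeq}. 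Since \eqref{supCAE} is already established in \cite{H-M I}, this shows that $x^{[j+1]},y^{[j+1]},g^{[j+1]}$ are the associated potentials of $\bar\omega^{[j]},\omega^{[j+1]}$, closing the induction.

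For the matrix form, the key observation is that the map $(x^{[j]},y^{[j]},g^{[j]})\mapsto(x^{[j+1]},y^{[j+1]},g^{[j+1]})$ is \emph{not} linear, since the last term of the $y$-line involves $1/g^{[j]}$; adjoining $1/g$ as a fourth coordinate cures this. Writing $v^{[j]}=(x^{[j]},y^{[j]},g^{[j]},1/g^{[j]})^{T}$, I would verify entry by entry that \eqref{difeq} is precisely $v^{[j+1]}=S^{[j]}v^{[j]}$ for the matrix \eqref{Smatrix}: the $x$-line yields $S^{[j]}_{11},S^{[j]}_{13}$, the $y$-line yields $S^{[j]}_{22},S^{[j]}_{24}$, and the $g$-line yields $S^{[j]}_{33}$. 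The one point needing care is that the augmented system must close, i.e. the fourth coordinate must keep tracking $1/g$. This holds because the only nonzero entry in the fourth row is $S^{[j]}_{44}=1/S^{[j]}_{33}$, so the fourth component of $v^{[j+1]}$ equals $(1/S^{[j]}_{33})(1/g^{[j]})=1/(S^{[j]}_{33}g^{[j]})=1/g^{[j+1]}$; thus the reciprocal relation between the third and fourth coordinates is preserved by $S^{[j]}$. Iterating $v^{[j+1]}=S^{[j]}v^{[j]}$ from $j=1$ to $n-1$ and invoking \eqref{ic} then gives \eqref{prop1eq}, where the ordered product is read so that $S^{[1]}$ stands rightmost and is applied first, as dictated by the recurrence.

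The final assertion is then immediate: by the first point, $x^{[n]},y^{[n]},g^{[n]}$ are genuine associated potentials \eqref{aspot} for the pair $\bar\omega^{[n-1]},\omega^{[n]}$, so by the construction recalled after \eqref{aspot} the function $z^{[n]}=1/(g^{[n]})^{2}$, expressed in terms of $x^{[n]}$ and $y^{[n]}$, solves \eqref{CAE}. I expect the only genuinely delicate step to be the bookkeeping in the inductive matching: getting the parameter substitution $(\lambda_1,\lambda_2)\mapsto(\lambda_{j+1},\lambda_1)$ together with the resulting signs and the combinations $\lambda_{j+1}^2\pm\lambda_1^2$ to agree exactly with \eqref{difeq}. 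Once the coordinate $1/g$ is introduced, the linearisation is routine, and the whole force of the proposition lies in recognising that this single extra coordinate turns the nonlinear superposition recurrence into a linear one, whose iteration is plain matrix multiplication.
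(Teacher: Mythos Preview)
Your proposal is correct and follows essentially the same approach as the paper: recognise that the recurrence \eqref{difeq} is linear once $1/g$ is adjoined as a fourth coordinate, encode it as $X^{[j+1]}=S^{[j]}X^{[j]}$, and iterate. The paper's own proof is much terser (it simply asserts the matrix form and iterates), while you additionally spell out the inductive matching of \eqref{supCAE} to \eqref{difeq} and the closure of the fourth coordinate under $S^{[j]}$; but these are exactly the points the paper handles in the text preceding the proposition rather than in the proof itself, so there is no genuine difference in strategy.
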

\begin{proof}
One easily observes that relations \eqref{difeq}  
can be obtained by acting of $4\times 4$ matrix $S^{[j]}$ with the only nonzero entries \eqref{Smatrix} 
on the column vector $X^{[j]} = (x^{[j]}, y^{[j]},g^{[j]}, 1/g^{[j]})$. The recurrent formula \eqref{difeq} with initial conditions 
\eqref{ic} then can be conveniently written as 
$$
X^{[j + 1]} = S^{[j]}X^{[j]}, \qquad X^{[1]} = X_1 = (x_1, y_1,g_1, 1/g_1)^T.
$$
Hence
$$
X^{[n]} =  \prod_{i = 1}^{n - 1} S^{[i]} X_1.
$$
\end{proof}

\begin{remark} \rm
Expanding the matrix formula \eqref{prop1eq} we have
$$
x^{[n]} = x_1 \lambda_1^{n-1}
\prod_{i=1}^{n-1}\frac {\lambda_{i+1}}{
\lambda_{i+1}^{2}-\lambda_{1}^{2}} \\
-
2\sum _{i=1}^{n-1}
 \left( \frac{
\lambda_{i+1}\lambda_1^{n-i+1} g^{[i]} \sin ( \bar{\omega}^{[i]}-\omega^{[i]} ) }
 { \lambda_{i+1}^{2} + \lambda_1^{2}-2\lambda_{i+1}\lambda_1\cos ( \bar{\omega}^{[i]}-\omega^{[i]} ) 
}
\prod _{p=1}^{n-i}{\frac{\lambda_{n-
p+1}}{\lambda_{{n-p+1}}^{2}-\lambda_1^2}}
 \right), 
\\
y^{[n]} = 
\frac{y_1}{\lambda_1^{n-1}}
\prod _{i=1}^{n-1}{\frac {\lambda_{{i+1}}^{2}-\lambda_{{1}}^{2}}{
\lambda_{{i+1}}}}
-
2\sum _{i=1}^{n-1} 
(
\frac{\sin ( \bar{\omega}^{[i]}-\omega^{[i]} )}
{\lambda_{{1}}^{n-i-1} g^{[i]} }
\prod _{p=1}^{n-i-1}{\frac {\lambda_{{n-p
+1}}^{2}-\lambda_{{1}}^{2}}{\lambda_{{n-p+1}}}}
),
\\
g^{[n]} = \lambda_1^{n-1} g_1 \prod _{i=1}^{n-1} \frac{-\lambda_{i+1}}{\lambda_{i+1}^2+\lambda_1^2-2\lambda_{i+1}\lambda_1\cos(\bar{\omega}^{[i]} - \omega^{[i]})}.
$$
\end{remark}

\section{Multisoliton solutions} 
Let $\omega^{[0]} = 0$. Let us define $\lambda_{kl}^+ := \lambda_k + \lambda_l,\;\; \lambda_{kl}^- := \lambda_k - \lambda_l$ and 
$$
a_i := \exp(\lambda_i \xi + \frac{\eta}{\lambda_i} + c_i).
$$
Solving System \eqref{BT}, we get one-soliton solutions  
$$
\omega^{[1]} = 2\arctan a_1, \qquad 
\bar{\omega}^{[1]} = 2\arctan a_2
$$
and, applying the superposition principle \eqref{supSG} to the triple $\omega^{[0]}, \omega^{[1]}, \bar{\omega}^{[1]}$, 
we easily obtain the two-soliton solutions
$$\numbered\label{twosol}
\omega^{[2]} = 2\arctan\frac{\lambda_{12}^+ (a_1-a_2)}{\lambda_{12}^- (1+a_1 a_2)}, \qquad 
\bar{\omega}^{[2]} = 2\arctan\frac{\lambda_{23}^+(a_2-a_3)}{\lambda_{23}^-(1+a_2a_3)}.
$$
An exact analytic $n$-soliton solution, in our notation $\omega^{[n]}$, of the sine-Gordon equation has been obtained by
several authors \cite{Abl, Bry, Cau I, Cau II, Cau III, Hir}, see also \cite{A-I, S-B}.
The formula best suited for this paper can be found e.g. in \cite{Cau III} and is of the form 
$$\numbered\label{SGmultisol}
 \omega^{[n]} = 
\frac12 \arccos  \varphi^{[n]},
$$
where 
$$\numbered\label{phi}
\varphi^{[n]} = 1 - 2\frac{\partial^2}{\partial \xi \,\partial\eta} \ln \det M
$$
$M$ being the $n\times n$ matrix with entries 
$$
M_{ij} = \frac{1}{\lambda_i + \lambda_j}
(\sqrt{a_i a_j} + \frac{1}{\sqrt{a_i a_j}}).
$$
Note also that $\bar{\omega}^{[n]}$ arises from $\omega^{[n]}$ by increasing all lambdas' indices by one, namely
$$\numbered\label{SGmultisolbar}
\bar{\omega}^{[n]} = 
 \frac12 \arccos \bar{\varphi}^{[n]},
$$
where
$$\numbered\label{phibar}
\bar{\varphi}^{[n]} = 1 - 2\frac{\partial^2}{\partial \xi \, \partial\eta} \ln \det \bar{M}
$$
$\bar{M}$ being given by
$$
\bar{M}_{ij} = \frac{1}{\lambda_{i + 1} + \lambda_{j + 1}}
(\sqrt{a_{i+1}a_{j+1}} + \frac{1}{\sqrt{a_{i+1}a_{j+1}}}).
$$

\begin{definition} \rm
By a \it $j$-soliton solution of the constant astigmatism equation \rm 
we shall mean a triple $(x^{[j]}, y^{[j]}, g^{[j]})$ formed by associated potentials corresponding to the 
$j$-soliton solution $\omega^{[j]}$ and the $(j-1)$-soliton solution $\bar{\omega}^{[j-1]}$ (see Diagram \eqref{lattice}) of the sine-Gordon equation.
\end{definition}
\begin{remark} \rm
To obtain a solution of the CAE explicitly, one would have to express $z^{[j]} = 1/{g^{[j]}}^2$ in terms of $x^{[j]}$ and $y^{[j]}$.
However, this is almost never possible in terms of elementary functions, see examples in Section~5 and~6.
\end{remark}

A one-soliton solution of the CAE is easy to construct. 
Following \cite{Bia a, Bia II}, see also \cite[Prop. 4]{H-M I}, $x^{[1]} = x_1$ and $g^{[1]} = g_1$ can be obtained by differentiation, namely 
$$\numbered\label{CA1solxg}
g_1 = \frac{\dif \omega^{[1]}}{\dif c_1} = \frac{2 a_1}{1+a_1^2}, \qquad
x_1 = -\frac{\dif \ln g_1}{\dif c_1} = \frac{a_1^2-1}{a_1^2 + 1}.
$$
For $y^{[1]} = y_1$ we have the system 
$$
y^{[1]}_\xi = \frac{\lambda_1\sin(\omega^{[1]} + \bar{\omega}^{[0]})}{g_1} = \lambda_1,  \\
y^{[1]}_\eta = -\frac{\sin(\omega^{[1]} - \bar{\omega}^{[0]})}{\lambda_1 g_1} = -\frac{1}{\lambda_1}
$$
with the general solution 
$$\numbered\label{CA1soly}
y_1 = \lambda_1 \xi - \frac{\eta}{\lambda_1} + k_1, 
$$
$k_1$ being an arbitrary constant. 
Setting $z_1 = 1/g_1^2$, eliminating $\xi,\eta$ and dropping the lower indices, one reveals the \it von~Lilienthal solution \rm
$$\numbered\label{vLsol}
z = \frac{1}{1 - x^2},
$$
see Figure \ref{onesol} on p.~11.

\begin{proposition} \label{propSoliton}
Let us denote 
$A^{[j]} = 2 \bar{\varphi}^{[j]} \varphi^{[j]}$ and 
$B^{[j]} = 2 \sqrt{({\bar{\varphi}^{[j]}}^2 - 1) ({\varphi^{[j]}}^2 - 1)}$, where 
$\varphi^{[j]}$ and $\bar{\varphi}^{[j]}$ are defined by \eqref{phi} and \eqref{phibar} respectively.
Then the $n$-soliton solution of the CAE is given by the formula
$$\numbered\label{nsolCAE}
(
\begin{array}{c}
x^{[n]} \\
y^{[n]} \\
g^{[n]} \\
1/g^{[n]} 
\end{array}
)
=
\prod_{i = 1}^{n - 1} S^{[i]} 
(
\begin{array}{c}
x_1 \\
y_1 \\
g_1 \\
1/g_1 
\end{array}
),
$$
where the only nonzero entries of matrices $S^{[j]}$ are given by 
$$\numbered\label{Smatrixsol}
S^{[j]}_{11} = \frac{\lambda_{j+1} \lambda_1}{\lambda_{j+1}^2 - \lambda_1^2}, \qquad 
S^{[j]}_{13} =  \frac{\lambda_{j+1}^2 \lambda_1^2}{\lambda_1^2 - \lambda_{j+1}^2} \cdot \frac
{\sqrt{2-A^{[j]}-B^{[j]}}}
{\lambda_1^2+\lambda_{j+1}^2 - \lambda_{j+1}\lambda_1 \sqrt{2+ A^{[j]} + B^{[j]}} } \\
S^{[j]}_{22} = \frac{\lambda_{j+1}^2 - \lambda_1^2}{\lambda_{j+1}\lambda_1}, \qquad
S^{[j]}_{24} = -\sqrt{2-A^{[j]}-B^{[j]}}, \\
S^{[j]}_{33} = \frac{1}{S^{[j]}_{44}} = 
\frac
{-\lambda_{j+1} \lambda_1}
{\lambda_1^2 + \lambda_{j+1}^2 -  \lambda_{j+1} \lambda_1 \sqrt{2+A^{[j]}+B^{[j]}} }.
$$ 
\end{proposition}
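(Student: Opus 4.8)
The plan is to deduce the statement directly from Proposition~\ref{prop1}, which already furnishes the matrix product representation \eqref{prop1eq} for the associated potentials, with the matrices $S^{[j]}$ given by \eqref{Smatrix}. Since the multisoliton setting is nothing but the specialization $\omega^{[0]} = 0$ of the general construction, the structural part of the claim — that $(x^{[n]}, y^{[n]}, g^{[n]})$ arises by applying the same product $\prod_{i=1}^{n-1} S^{[i]}$ to the one-soliton data $(x_1, y_1, g_1)$, and that $z^{[n]} = 1/{g^{[n]}}^2$ then solves \eqref{CAE} — is inherited immediately. What remains is purely trigonometric bookkeeping: I must verify that the entries \eqref{Smatrix}, written in terms of $\sin(\bar{\varphi}^{[j]}$-type arguments$)$, more precisely $\sin(\bar{\omega}^{[j]} - \omega^{[j]})$ and $\cos(\bar{\omega}^{[j]} - \omega^{[j]})$, coincide with the entries \eqref{Smatrixsol} expressed through $A^{[j]}$ and $B^{[j]}$.

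To carry this out I would start from the relations underlying \eqref{SGmultisol} and \eqref{SGmultisolbar}, namely $\cos 2\omega^{[j]} = \varphi^{[j]}$ and $\cos 2\bar{\omega}^{[j]} = \bar{\varphi}^{[j]}$, where $\varphi^{[j]}, \bar{\varphi}^{[j]}$ are given by \eqref{phi} and \eqref{phibar}. Setting $\theta = \bar{\omega}^{[j]} - \omega^{[j]}$ and using the half-angle identities $4\sin^2\theta = 2 - 2\cos 2\theta$ and $4\cos^2\theta = 2 + 2\cos 2\theta$, the entire computation reduces to evaluating $\cos 2\theta = \cos(2\bar{\omega}^{[j]} - 2\omega^{[j]})$. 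Expanding by the subtraction formula and inserting $\cos 2\omega^{[j]} = \varphi^{[j]}$, $\cos 2\bar{\omega}^{[j]} = \bar{\varphi}^{[j]}$ together with $\sin 2\omega^{[j]} = \sqrt{1 - {\varphi^{[j]}}^2}$ and $\sin 2\bar{\omega}^{[j]} = \sqrt{1 - {\bar{\varphi}^{[j]}}^2}$ gives $2\cos 2\theta = 2\varphi^{[j]}\bar{\varphi}^{[j]} + 2\sqrt{(1 - {\varphi^{[j]}}^2)(1 - {\bar{\varphi}^{[j]}}^2)} = A^{[j]} + B^{[j]}$, the last equality using $(\varphi^2 - 1)(\bar{\varphi}^2 - 1) = (1 - \varphi^2)(1 - \bar{\varphi}^2)$. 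Hence $2\cos\theta = \sqrt{2 + A^{[j]} + B^{[j]}}$ and $2\sin\theta = \sqrt{2 - A^{[j]} - B^{[j]}}$. Substituting these two identities into \eqref{Smatrix}, and absorbing the sign via $1/(\lambda_{j+1}^2 - \lambda_1^2) = -1/(\lambda_1^2 - \lambda_{j+1}^2)$ in the entry $S^{[j]}_{13}$, reproduces \eqref{Smatrixsol} entry by entry, completing the reduction.

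The main obstacle I anticipate is not the algebra but the choice of branch for the square roots. The step $2\cos 2\theta = A^{[j]} + B^{[j]}$ tacitly takes $\sin 2\omega^{[j]}$ and $\sin 2\bar{\omega}^{[j]}$ with the same sign, so that their product contributes $+B^{[j]}$ rather than $-B^{[j]}$; had the signs been opposite one would instead land on $A^{[j]} - B^{[j]}$. I would therefore need to check, from the explicit form of the $n$-soliton expressions built from the $a_i$ — equivalently from the monotone behaviour of $\omega^{[j]}$ and $\bar{\omega}^{[j]}$ — that $2\omega^{[j]}$ and $2\bar{\omega}^{[j]}$ lie in a common half-period on which both sines keep a fixed sign, and likewise that $\theta$ remains in the range where $\sin\theta$ and $\cos\theta$ are nonnegative, so that the outer square roots in \eqref{Smatrixsol} are taken with the intended sign. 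This branch analysis, rather than the elementary trigonometric identity, is where genuine care is required.
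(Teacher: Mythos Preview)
Your argument is correct and follows exactly the route the paper takes: its proof consists of the single sentence that formulas \eqref{Smatrixsol} ``follow from plugging \eqref{SGmultisol} and \eqref{SGmultisolbar} into \eqref{Smatrix} and employing trigonometric identities,'' and you have simply spelled out those identities. Your cautionary remark about the branch choices for the square roots is well taken and in fact goes beyond what the paper addresses.
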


\begin{proof}
Formulas \eqref{Smatrixsol} follow from plugging \eqref{SGmultisol} and \eqref{SGmultisolbar} into \eqref{Smatrix} 
and employing trigonometric identities.
\end{proof}

\section{Surfaces of constant astigmatism}
Let $\mathbf r $ be a pseudospherical surface corresponding to a sine-Gordon solution $\omega$. Its B\"acklund transformation $\mathbf r^{(\lambda)}$ is given by the formula 
$$\numbered\label{BTsurf}
\mathbf r^{(\lambda)} = \mathbf r
 + \frac{2\lambda}{1 + \lambda^2}(\frac{\sin(\omega - \omega^{(\lambda)})}{\sin(2\omega)} {\mathbf  r_{\xi}}
 + \frac{\sin(\omega + \omega^{(\lambda)})}{\sin(2\omega)} {\mathbf r_{\eta}}).
$$
Let us define (cf. \eqref{defk})
$$
\mathbf r^{[k]} = \mathbf r^{(\lambda_1\lambda_2\ldots\lambda_k)}, \qquad
\bar{\mathbf r}^{[k]} = \mathbf r^{(\lambda_2\lambda_3\ldots\lambda_{k+1})}.
$$
Then we have the recurrence relation
$$\numbered\label{BTsurfseq}
\mathbf r^{[j+1]} = 
\mathbf r^{[j]}
+ \frac{2\lambda_{j+1}}{1 + \lambda_{j+1}^2}(\frac{\sin(\omega^{[j]} - \omega^{[j+1]})}{\sin(2\omega^{[j]})} {\mathbf  r^{[j]}_{\xi}}
+ \frac{\sin(\omega^{[j]} + \omega^{[j+1]})}{\sin(2\omega^{[j]})} {\mathbf r^{[j]}_{\eta}})
$$
with the initial condition $\mathbf r^{[0]} = \bar{\mathbf r}^{[0]} = \mathbf r_0$. Surfaces $\bar{\mathbf r}^{[i]} $ are obtained from $\mathbf r^{[i]} $ simply by increasing all lambdas' indices by one and replacing 
$\omega^{[i]}$ with $\bar{\omega}^{[i]}$. The iteration process is shown in the diagram below, cf. \eqref{lattice}.
\begin{displaymath}
    \xymatrix{ 
{\bf r}^{[0]} \ar[r]^{\lambda_2} \ar[d]_{\lambda_1} 	& \bar{{\bf r}}^{[1]} \ar[r]^{\lambda_3} \ar[d]_{\lambda_1} & \bar{{\bf r}}^{[2]} \ar[d]_{\lambda_1} \ar[r]^{\lambda_4}  & \bar{{\bf r}}^{[3]} \ar[d]_{\lambda_1} \ar[r]^{\lambda_5}  & \bar{{\bf r}}^{[4]} \ar[d]_{\lambda_1} \ar[r]^{\lambda_6} & \dots
\\
{\bf r}^{[1]} \ar[r]^{\lambda_2}  & 	{\bf r}^{[2]}  \ar[r]^{\lambda_3} & 	{\bf r}^{[3]} \ar[r]^{\lambda_4} & 	{\bf r}^{[4]} \ar[r]^{\lambda_5} & 	{\bf r}^{[5]} \ar[r]^{\lambda_6} & \dots
}
\end{displaymath} 
Substituting $\lambda = 1$ into \eqref{BTsurf} one gets what is called a \it complementary \rm pseudospherical surface
$$
\mathbf r^{(1)} = \mathbf r
 + \frac{\sin(\omega - \omega^{(1)})}{\sin(2\omega)} {\mathbf  r_{\xi}}
 + \frac{\sin(\omega + \omega^{(1)})}{\sin(2\omega)} {\mathbf r_{\eta}}.
$$
Obviously, the surfaces $\mathbf r^{[j]}$ and $\bar{\mathbf r}^{[j-1]}$ become complementary when substituting 
$\lambda_1 = 1$ into~$\mathbf r^{[j]}$.

According to \cite[Prop. 3]{H-M I}, the common involute, $\tilde{\bf r}^{[j]}$, 
of a pair of complementary pseudospherical surfaces, $ \mathbf r^{[j]}|_{\lambda_1 = 1} $ and $ \bar{\mathbf r}^{[j-1]} $, 
is of constant astigmatism and is given by the equation
$$
\tilde{\bf r}^{[j]} =  \bar{\mathbf r}^{[j-1]} - \tilde{\mathbf n}^{[j]} \ln g^{[j]} |_{\lambda_1 = 1}, 
$$
where $g^{[j]} $ is determined by \eqref{prop1eq} and  $\tilde{\mathbf n}^{[j]}$, a unit normal of the constant astigmatism surface, is simply
$$
\tilde{\mathbf n}^{[j]} = \left. \mathbf r^{[j]} \right|_{\lambda_1 = 1} - \bar{\mathbf r}^{[j-1]} .
$$
If the surfaces $\mathbf r^{[j]}|_{\lambda_1 = 1} $ and $ \bar{\mathbf r}^{[j-1]} $ are $j$-soliton and $(j-1)$-soliton pseudospherical surfaces respectively, then the corresponding common involute, $\tilde{\bf r}^{[j]}$, will be called a \it $j$-soliton surface of constant astigmatism. \rm
 
Let us also remark that $\tilde{\mathbf n}^{[j]}(\xi,\eta)$ parameterises a unit sphere by slip lines (see the Introduction, 
for details see \cite{H-M I}).

\section{Examples of multisoliton solutions} 
In this section we provide explicit formulas for some multisoliton solutions of the constant astigmatism equation 
as well as corresponding constant astigmatism surfaces. 
 
Firstly, let us introduce a notation. Let us define 
$$
\alpha := \xi - \eta, \qquad \beta := \xi + \eta
$$
which is nothing but the space and time coordinates in which sine-Gordon equation is of the form 
$\omega_{\beta\beta} - \omega_{\alpha\alpha}  = \sin \omega$. 
Also recall that
$$
a_i := \exp(\lambda_i \xi + \frac{\eta}{\lambda_i} + c_i), \qquad
\lambda_{kl}^+ := \lambda_k + \lambda_l,\qquad \lambda_{kl}^- := \lambda_k - \lambda_l
$$
and, in order to have short formulas, let us define 
$$
a := \exp( \xi + \eta + c), \qquad
\lambda_{kl}^\oplus := \lambda_k^2 + \lambda_l^2,\qquad 
\lambda_{kl}^\ominus := \lambda_k^2 - \lambda_l^2, \\
\lambda_{k}^+ := \lambda_k + 1,\qquad \lambda_{k}^- := \lambda_k -1, \qquad 
\lambda_{k}^\oplus := \lambda_k^2 + 1,\qquad \lambda_{k}^\ominus := \lambda_k^2 -1.
$$

\subsection{One-soliton solutions}
In Section 3 we constructed the one soliton solution of the CAE corresponding to the pair $\omega^{[0]} = 0$ and 
$\omega^{[1]} = 2\arctan a_1$. It belongs to the von~Lilienthal class. 
Let us proceed to the corresponding surfaces of constant astigmatism.
The family of well known one-soliton Dini's surfaces is
$$
\bar{\mathbf r}^{[1]} = 
(
\frac{4 \lambda_2 a_2 \cos\alpha }{\lambda_2^+ (a_2^2 + 1)}, 
\frac{4 \lambda_2 a_2 \sin\alpha }{\lambda_2^+ (a_2^2 + 1)},
\beta-\frac{2 \lambda_2 (a_2^2-1)}{\lambda_2^+ (a_2^2 + 1)}
).
$$
Substituting $\lambda_1 = 1$ into another Dini's surface
$$ 
\mathbf r^{[1]} = (
\frac{4 \lambda_1 a_1 \cos\alpha }{\lambda_1^+ (a_1^2 + 1)}, 
\frac{4 \lambda_1 a_1 \sin\alpha }{\lambda_1^+ (a_1^2 + 1)},
\beta-\frac{2 \lambda_1 (a_1^2-1)}{\lambda_1^+ (a_1^2 + 1)}
)
$$ 
we obtain the pseudosphere 
$$\numbered\label{pseudosphere}
\mathbf r^{[1]}|_{\lambda_1 = 1} = 
(
\frac{2 a \cos\alpha }{a^2 + 1}, 
\frac{2 a \sin\alpha }{a^2 + 1},
\beta+ \frac{1 - a^2}{1 + a^2}
).
$$
Note that in this case the `seed surface' $\mathbf r^{[0]} = \mathbf r_0$ is degenerated and coincides with the $z$-axis $(0,0,\beta)$.

The Gaussian map of corresponding constant astigmatism surface is 
$$
\tilde{\mathbf{n}}^{[1]} = \mathbf r^{[1]}|_{\lambda_1 = 1} - \mathbf{r}_0 = 
( \frac{2a \cos\alpha}{a^2 + 1}, 
\frac{2a \sin\alpha}{a^2 + 1},
\frac{1 - a^2}{1 + a^2}
),
$$
forming the net of $45^{\circ}$ loxodromes on the unit sphere. The family of one-soliton constant astigmatism surfaces is then  
$$\numbered\label{vLsurf}
\tilde{\mathbf{r}}^{[1]} = \mathbf{r}_0 -  \tilde{\mathbf{n}}^{[1]} \ln (k g^{[1]})|_{\lambda_1 = 1} = (
\begin{array}{c}
\frac{-2a}{1+a^2} \ln(\frac{2k a}{1+a^2})\cos\alpha \\
\frac{-2a}{1+a^2} \ln(\frac{2k a}{1+a^2})\sin\alpha \\
\beta + \frac{a^2-1}{a^2 + 1}\ln(\frac{2k a}{1+a^2}) 
\end{array}
),
$$ 
$k$ being a real constant.
The surfaces coincide with the von~Lilienthal class \cite{vLil}, see Figure \ref{onesol}; for detailed description and pictures see \cite{B-M I}.  
Evolutes of the surface $\tilde{\mathbf{r}}^{[1]}$
are the pseudosphere $\mathbf{r}^{[1]}|_{\lambda_1 = 1}$ and the $z$-axis $ \mathbf{r}_0$. 

\begin{figure}[ht] 
\begin{center} 
\includegraphics[scale=0.25]{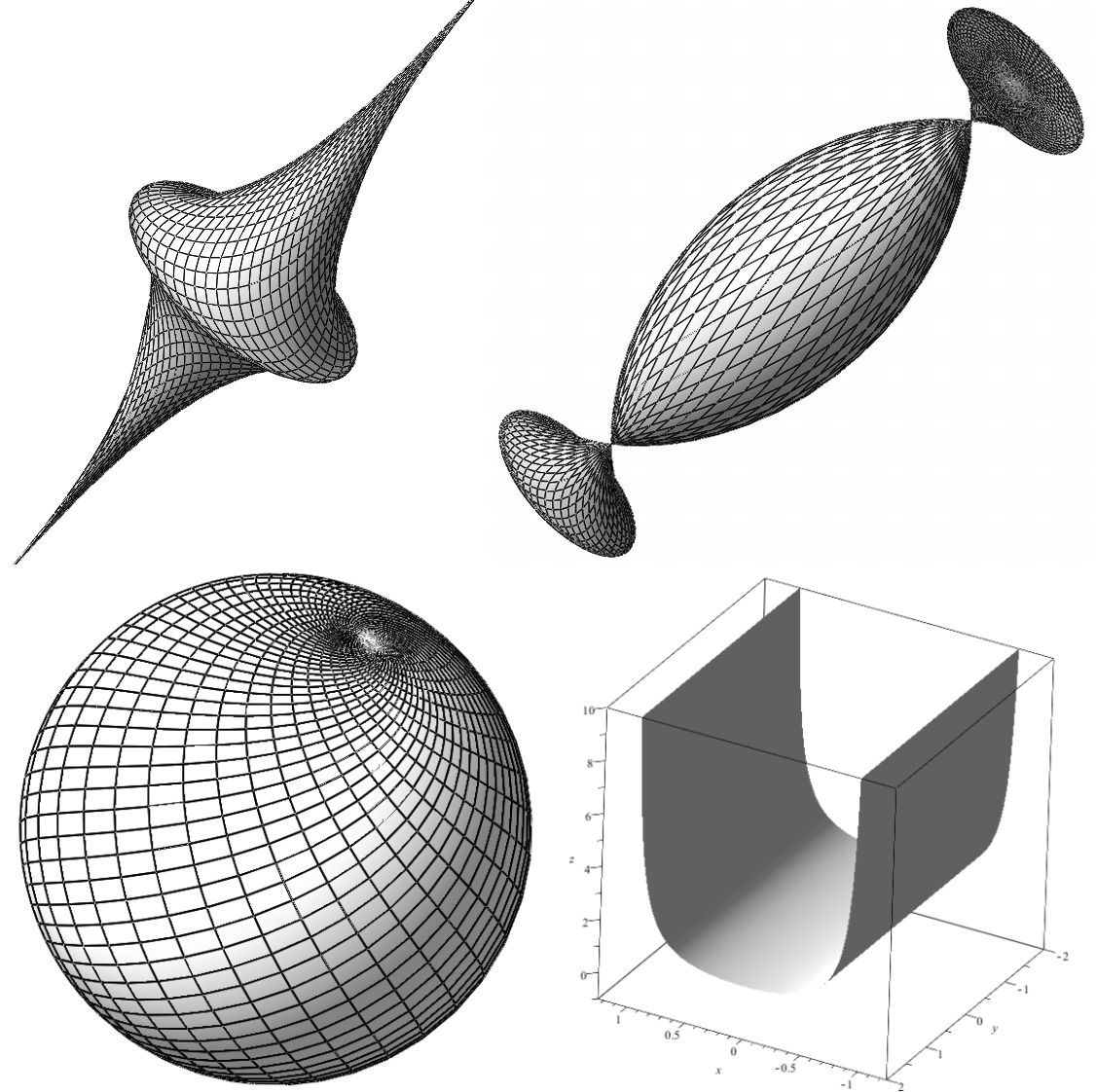}
\caption{Pseudosphere $\mathbf r^{[1]}(\xi,\eta)|_{\lambda_1 = 1}$ (upper left); 
von~Lilienthal surface $\tilde{\mathbf{r}}^{[1]}(\xi,\eta)$ for $\ln k = 0.3$ (upper right);
Gaussian map $\tilde{\mathbf{n}}^{[1]}(\xi,\eta)$ (lower left);
solution $z = 1/(1-x^2)$ of the CAE (lower right).}\label{onesol} 
\end{center}
\end{figure} 

\subsection{Two-soliton solutions}\label{TwosolEx}
The two-soliton solutions of the sine-Gordon equation are given by \eqref{twosol}
and the two soliton solution of the CAE corresponding to the pair  $\bar{\omega}^{[1]}$ and $\omega^{[2]}$ 
is 
$$\numbered\label{CA2sol}
x^{[2]} = \frac{-\lambda_2  \lambda_1}{{\lambda_{12}^\ominus}} \cdot 
\frac{
  {\lambda_{12}^+}^2 (a_1^2-a_2^2) + {\lambda_{12}^-}^2 (a_1^2 a_2^2 - 1)}
{   
{\lambda_{12}^+}^2 (a_1^2+a_2^2) 
+  {\lambda_{12}^-}^2 (a_1^2 a_2^2+1)
-8 {\lambda_1} {\lambda_2} a_1 a_2 }, \\
y^{[2]} =
-\frac{\lambda_{12}^\ominus}{\lambda_1^2\lambda_2} (\lambda_1^2 \xi - \eta)
+
\frac{2 (1+a_1 a_2)(a_1-a_2)}{a_1 (1+a_2^2)} , \\
{g^{[2]}} = 
\frac
{-2 \lambda_1 \lambda_2  a_1 (1+a_2^2)}
{{\lambda_{12}^+}^2 (a_1^2+a_2^2) 
+  {\lambda_{12}^-}^2 (a_1^2 a_2^2+1)
-8 {\lambda_1} {\lambda_2} a_1 a_2}, \\
z^{[2]} = \frac{1}{{g^{[2]}}^2} = (
\frac{{\lambda_{12}^+}^2 (a_1^2+a_2^2) 
+  {\lambda_{12}^-}^2 (a_1^2 a_2^2+1)
-8 {\lambda_1} {\lambda_2} a_1 a_2 }
{2 \lambda_1 \lambda_2  a_1 (1+a_2^2)})^2.
$$
The triple $z^{[2]}$, $x^{[2]}$, $y^{[2]}$ provides a solution of the CAE in parametric form; the plot 
can be seen in Figure \ref{CAsol}. 
Eliminating $\xi, \eta$ one obtains an implicit formula for the function $z(x,y) = z^{[2]}(x^{[2]},y^{[2]})$, namely
$$
y =
2 \ln a_2  - 
\frac{\lambda_{12}^\oplus \ln a_1}{\lambda_1  \lambda_2 }
+
\frac{2 (1+a_1 a_2)(a_1-a_2)}{a_1 (1+a_2^2)} ,
$$
where 
$$
a_1 = \frac{
-(x^2{\lambda_{12}^{\ominus}}^2 - {\lambda_{1}^2} {\lambda_{2}^2})^2 z^2
-2 {\lambda_{12}^+}^4 
(x^2{\lambda_{12}^-}^4 - {\lambda_{1}^2} {\lambda_{2}^2}) 
 z
+2 K {\lambda_{1}} {\lambda_{2}} {\lambda_{12}^+}^2 \sqrt{z}-{\lambda_{12}^\ominus}^4}
{(x\lambda_{12}^{\ominus} +{\lambda_{1}} {\lambda_{2}})^2 (4  \lambda_{1}^2 \lambda_{2}^2 z^{\frac32} +
 K z )+4 \lambda_{1}^2 \lambda_{2}^2 {\lambda_{12}^\ominus}^2 \sqrt{z}+
K{\lambda_{12}^\ominus}^2 }, 
$$
$$
a_2 = 
\frac{4\lambda_2^2\lambda_1^2\sqrt{z} +K}
{{\lambda_{12}^\ominus}^2 + (x^2{\lambda_{12}^\ominus}^2-\lambda_1^2\lambda_2^2)z}, 
\qquad 
K = 16\lambda_2^4\lambda_1^4{z} - [{\lambda_{12}^\ominus}^2+ (x^2{\lambda_{12}^\ominus}^2 - \lambda_1^2\lambda_2^2)z]^2 .
$$

\begin{figure}[ht] 
\begin{center} 
\includegraphics[scale=0.32]{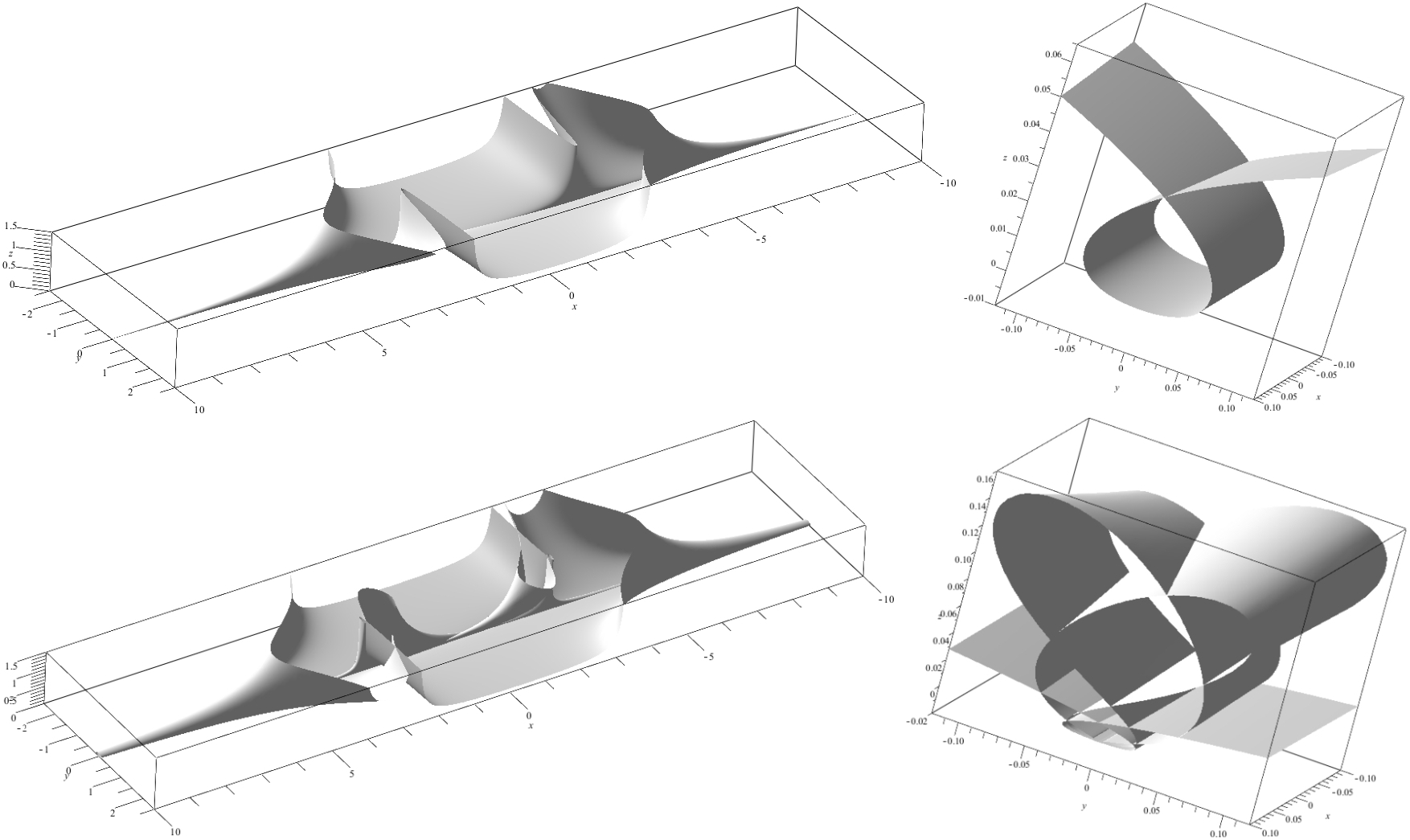}
\caption{Two soliton solution (top) and three soliton solution (bottom) of the CAE, $\lambda_1 = 1.2$, $\lambda_2 = 1.5$, $\lambda_3 = 1.8$, $c_i = 0$.  Right parts of the figures show the behavior around the origin.}\label{CAsol}
\end{center}
\end{figure}

Using \eqref{BTsurfseq} one obtains the two-soliton pseudospherical surface
$$
\mathbf{r}^{[2]} = 
\frac{4\lambda_{12}^\ominus}{\lambda_{1}^\oplus \lambda_{2}^\oplus R_1 }
[
(
\begin{array}{c}
R_2 \\
R_3 \\
0 
\end{array}
) \sin\alpha
+
(
\begin{array}{c}
R_3 \\
-R_2 \\
0
\end{array}
) \cos\alpha
]
+ 
(
\begin{array}{c}
0 \\
0 \\
\beta +  \frac{R_4}{\lambda_{1}^\oplus \lambda_{2}^\oplus R_1 }
\end{array}
) ,
$$
where
$$\numbered \label{P}
R_1  =   {\lambda_{12}^+}^2 (a_1^2+a_2^2) 
+  {\lambda_{12}^-}^2 (a_1^2 a_2^2+1)
-8 {\lambda_1} {\lambda_2} a_2 a_1, \\
R_2 = 2 {\lambda_1} {\lambda_2} (1+a_1 a_2) (a_1-a_2), \\
R_3  =  {\lambda_2}  {\lambda_1^\ominus} a_2 (1+a_1^2) - {\lambda_1} {\lambda_2^\ominus} a_1 (1+a_2^2), \\
R_4 =  2  {\lambda_{12}^\ominus} {\lambda_{12}^+} ({\lambda_1} {\lambda_2}-1) (a_1^2-a_2^2)  
- 2  {\lambda_{12}^\ominus} {\lambda_{12}^-} ({\lambda_1} {\lambda_2}+1) (a_1^2 a_2^2 - 1),  
$$
for picture see Figure \ref{PSsurf}.
\begin{figure}[ht] 
\begin{center} 
\includegraphics[scale=0.17]{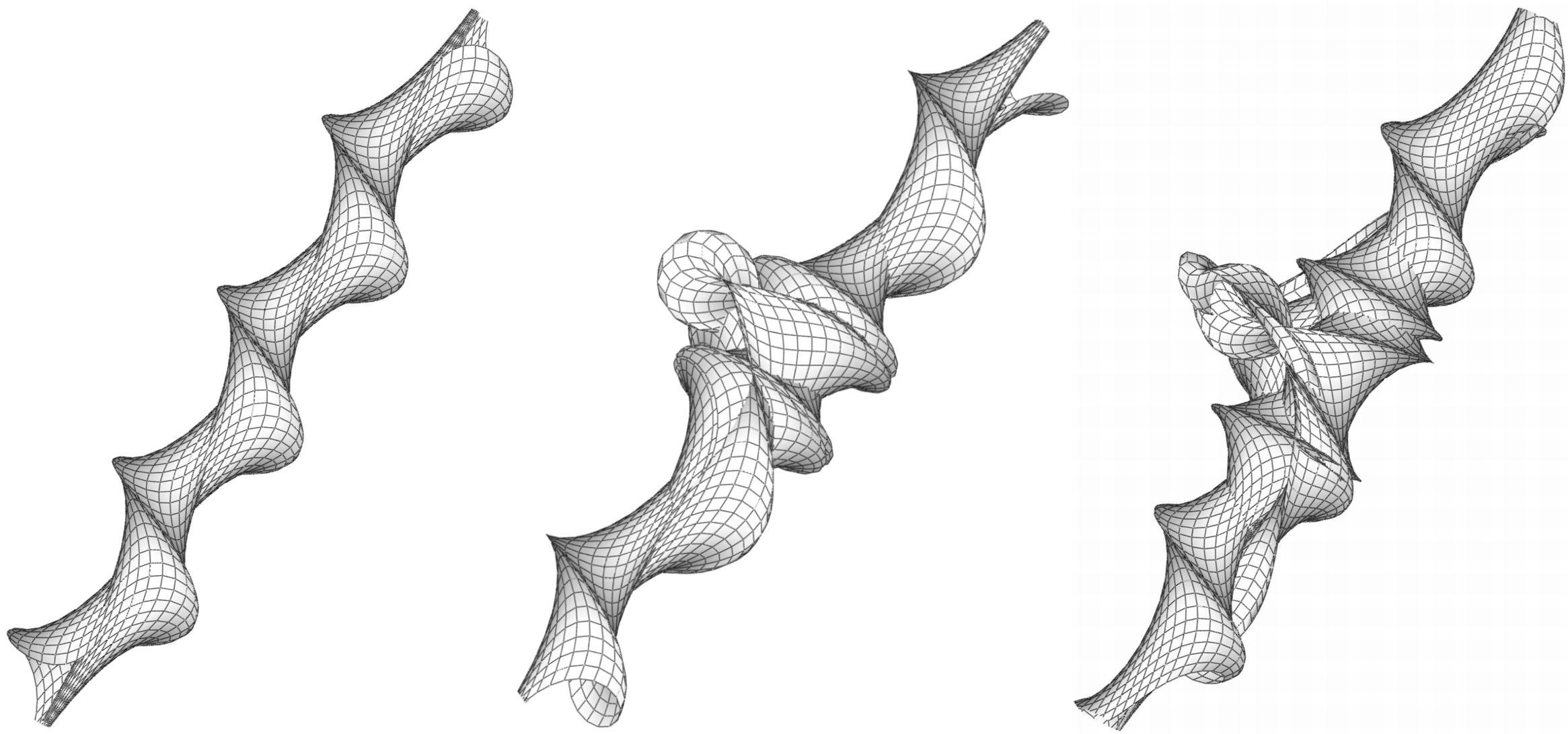}
\caption{From the left: One-, two- and three-soliton pseudospherical surfaces, $\bar{\mathbf r}^{[1]}, \mathbf{r}^{[2]}, \mathbf{r}^{[3]}$ respectively, 
$\lambda_1 = 1$, $\lambda_2 = 1.5$, $\lambda_3 = 1.8$, $c_i = 0$.}\label{PSsurf}
\end{center}
\end{figure}
The Gaussian map of corresponding constant astigmatism surface $\tilde{\mathbf{r}}^{[2]}$ is  
$$
\tilde{\mathbf{n}}^{[2]} = \mathbf{r}^{[2]}|_{\lambda_1 = 1} - \bar{\mathbf r}^{[1]} = 
(
\begin{array}{c}
N_1\sin\alpha + N_2\cos\alpha \\
N_2\sin\alpha - N_1\cos\alpha \\
N_3
\end{array}
), 
$$
where
$$
N_1 = -2\frac{{\lambda_2^\ominus} R_2'}{\lambda_2^\oplus R_1'}, \quad
N_2 = -2\frac{{\lambda_2^\ominus}R_3'}{\lambda_2^\oplus R_1'}
-\frac{4 a_2 \lambda_2}{{\lambda_{2}^\oplus} (a_2^2 + 1)}, \quad
N_3 = \frac{R_4'}{2 {\lambda_{2}^\oplus} R_1' } + \frac{2 \lambda_2 (a_2^2-1) }{{\lambda_{2}^\oplus} (a_2^2 + 1)}.
$$
Obviously, $R'_i$ arises from $R_i$ by substituting $\lambda_1 = 1$, namely
$$
\numbered 
R'_1  =   {\lambda_{2}^+}^2 (a^2+a_2^2) 
+  {\lambda_{2}^-}^2 (a^2 a_2^2+1)
-8 {\lambda_2} a a_2, \\
R'_2 = 2 {\lambda_2} (1+a a_2) (a-a_2), \\
R'_3 = - {\lambda_2^\ominus} a (1+a_2^2), \\
R'_4 =  -2  {\lambda_{2}^\ominus} {\lambda_{2}^+} ( {\lambda_2}-1) (a^2-a_2^2)  
- 2  {\lambda_{2}^\ominus} {\lambda_{2}^-} ( {\lambda_2}+1) (a^2 a_2^2 - 1). 
$$ 
Recall that $\tilde{\mathbf{n}}^{[2]}(\xi,\eta)$ parameterises the unit sphere by slip lines, 
for a picture see the left side of Figure~\ref{n2}. 
One can easily check that in the case when $c_i = 0$, the slip lines net is symmetric with respect to the point $\tilde{\mathbf{n}}^{[2]}(0,0) = (1,0,0)$, i.e. if 
$\tilde{\mathbf{n}}^{[2]}(\xi,\eta) = (n_x, n_y, n_z)$, then $\tilde{\mathbf{n}}^{[2]}(-\xi,-\eta) = (n_x, -n_y, -n_z )$. In the picture,
we observe the sphere from the positive part of the $x$-axis and the symmetry of the pattern can be clearly recognized.    
Some parts of the sphere are multiply covered and one can see singularities (envelopes) of slip lines.  
We leave open the question (suggested by a few numerical calculations) 
whether coordinate lines $\xi = \rm const$ and $\eta = \rm const$ converge to 
the poles, i.e. points $(0,0,\pm 1)$. 
\begin{figure}[ht] 
\begin{center} 
\includegraphics[scale=0.4]{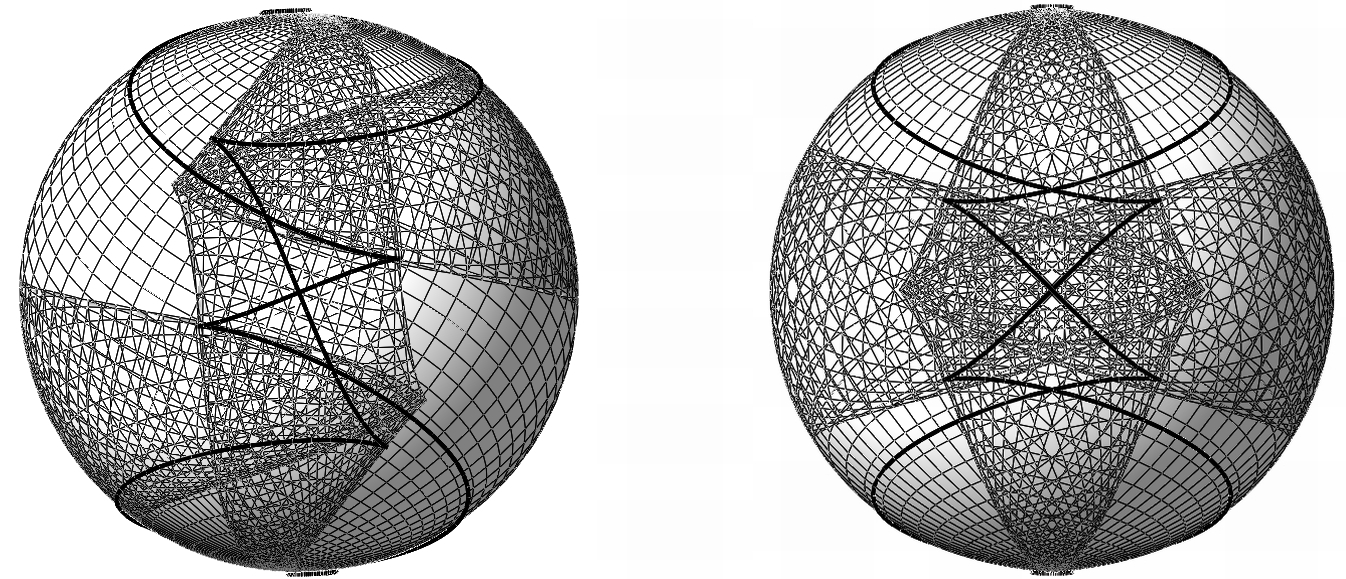}
\caption{Slip line field $\tilde{\mathbf{n}}^{[2]}$, $\lambda_2 = 1.5$, $c_i = 0$ (left) and its limit $\tilde{\mathbf{n}}^{\{2\}}$, $c = 0$ (right) with coordinate lines $\xi = 0$ and $\eta = 0$ highlighted (thick black curves).}\label{n2}
\end{center}
\end{figure}

To obtain an associated orthogonal equiareal pattern one needs to invert the transformation $(x,y) \leftrightarrow (\xi,\eta)$ given by first two equations in \eqref{CA2sol}, which is not possible in terms of elementary functions. The parameterisation 
$\tilde{\mathbf{n}}^{[2]}(x,y)$ then would provide orthogonal equiareal net sought. 

The corresponding family of two-soliton constant astigmatism surfaces having evolutes $\mathbf{r}^{[2]}|_{\lambda_1 = 1}$ and a Dini surface $\bar{\mathbf r}^{[1]} $ is then 
$$
\tilde{\mathbf{r}}^{[2]} = \bar{\mathbf r}^{[1]} - f \tilde{\mathbf{n}}^{[2]}  = 
(
\begin{array}{c}
\tilde{R}_1\sin\alpha + \tilde{R}_2\cos\alpha \\
\tilde{R}_2\sin\alpha - \tilde{R}_1\cos\alpha \\
\beta - \tilde{R}_3
\end{array}
), 
$$
where 
$$
f = \ln (k g^{[2]})|_{\lambda_1 = 1}, \quad
\tilde{R}_1 = 2 f \frac{{\lambda_2^\ominus} R_2'}{{\lambda_2^\oplus} R_1'}, \quad
\tilde{R}_2 = 2 f 
\frac{ {\lambda_2^\ominus} R_3'}{{\lambda_2^\oplus} R_1'} + 
\frac{4 \lambda_2 a_2}{{\lambda_2^\oplus} (a_2^2 + 1)}( f + 1   ), \\
\tilde{R}_3 =  \frac12 f \frac{R_4'}{{\lambda_2^\oplus}R_1'} 
+ \frac{2 \lambda_2 (a_2^2-1)}{{\lambda_2^\oplus} (a_2^2 + 1)}(f + 1).
$$ 
A part of the surface can be seen in the left side of Figure~\ref{CA23}. One can observe cuspidal edges  
obviously related to singularities of corresponding slip line field (Figure~\ref{n2}).
\begin{figure}[ht] 
\begin{center} 
\includegraphics[scale=0.5]{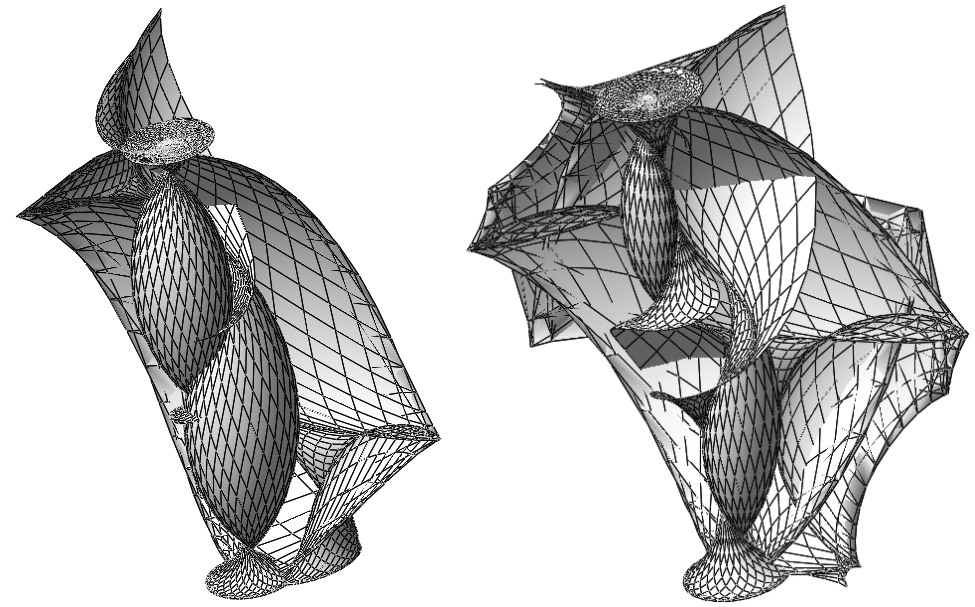}
\caption{Two soliton surface $\tilde{\mathbf{r}}^{[2]}$ (left) and three soliton surface $\tilde{\mathbf{r}}^{[3]}$ (right)  
of constant astigmatism, $\lambda_2 = 1.5$, $\lambda_3 = 1.8$, $c_i = 0$, $k = 1$.}\label{CA23}
\end{center}
\end{figure}

\subsection{Three-soliton solutions}
The 3-soliton solution, $(x^{[3]},y^{[3]},g^{[3]})$, 
of the CAE corresponding to the pair 
$$
\omega^{[3]}
= 2\arctan(
\frac{
  {\lambda_{12}^+} {\lambda_{13}^+} {\lambda_{23}^-} a_1
- {\lambda_{12}^+} {\lambda_{13}^-} {\lambda_{23}^+} a_2
+ {\lambda_{12}^-} {\lambda_{13}^+} {\lambda_{23}^+}  a_3
+ {\lambda_{12}^-} {\lambda_{13}^-} {\lambda_{23}^-}  a_1 a_2 a_3}
{
 {\lambda_{12}^-} {\lambda_{13}^+} {\lambda_{23}^+}  a_1 a_2
-{\lambda_{12}^+} {\lambda_{13}^-} {\lambda_{23}^+}  a_1 a_3
+{\lambda_{12}^+} {\lambda_{13}^+} {\lambda_{23}^-} a_2 a_3
+{\lambda_{12}^-} {\lambda_{13}^-} {\lambda_{23}^-} } 
) \\
$$ 
and $\bar{\omega}^{[2]}$ can be easily obtained using Proposition \ref{propSoliton}, 
explicit formulas being too lengthy to be written here. The graph of 
the solution $z(x,y) = z^{[3]}(x^{[3]},y^{[3]})$ can be seen in Figure \ref{CAsol},
a multivaluedness of the function $z$ being clearly identified. 
In the right side of the figure (values of $z$ near the point $(0,0,0)$) one can observe that at least eight values of $z$ 
may correspond to one particular choice of $x$ and $y$.

Proceeding to the constant astigmatism surface we firstly compute the three-soliton pseudospherical surface (see Figure \ref{PSsurf})
$$
\mathbf{r}^{[3]} = \mathbf{r}^{[2]} 
+ \frac{2\lambda_3}{\lambda_3^2 + 1}
(
\frac{\sin (\omega^{[2]} - \omega^{[3]})} {\sin 2 \omega^{[2]} } 
\mathbf{r}^{[2]}_\xi
+
\frac{\sin (\omega^{[2]} + \omega^{[3]})} {\sin 2 \omega^{[2]} } 
\mathbf{r}^{[2]}_\eta
).
$$
The Gaussian map of corresponding constant astigmatism surface is then
$$
\tilde{\mathbf{n}}^{[3]} = \mathbf{r}^{[3]}|_{\lambda_1 = 1} - \bar{\mathbf r}^{[2]}
$$
forming a net of slip lines, for picture see the left side of Figure~\ref{n3}. 
\begin{figure}[ht] 
\begin{center} 
\includegraphics[scale=0.4]{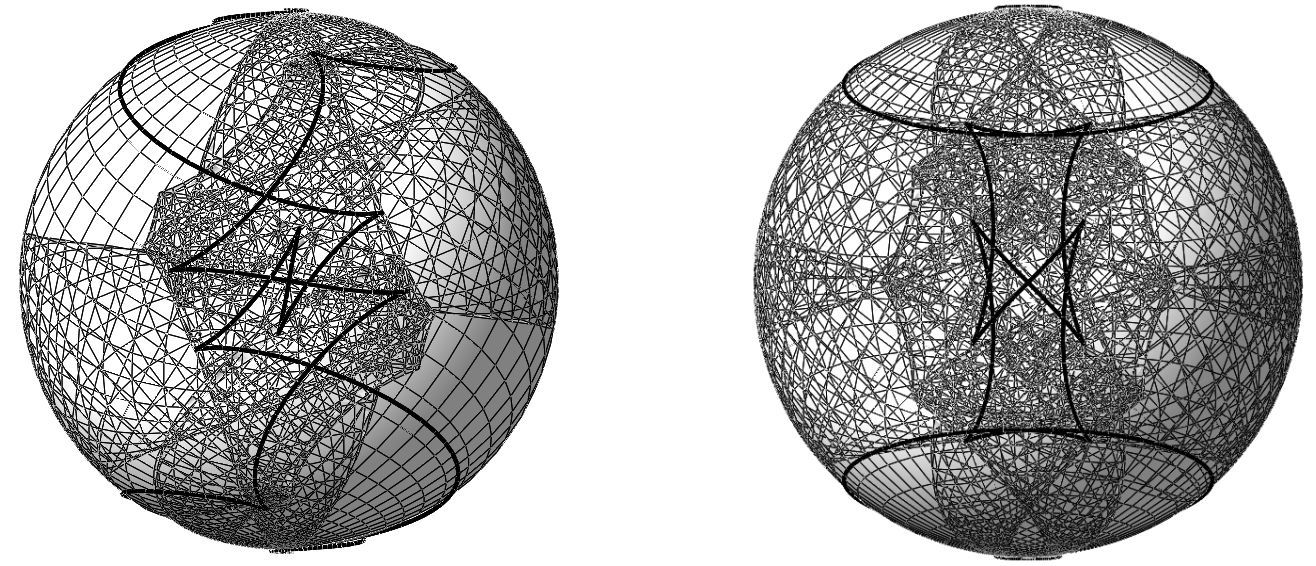}
\caption{Slip line field $\tilde{\mathbf{n}}^{[3]}$, $\lambda_2 = 1.5$, $\lambda_3 = 1.8$,  $c_i = 0$ (left) 
and its limit $\tilde{\mathbf{n}}^{\{3\}}$, $c = 0$ (right) with coordinate lines $\xi = 0$ and $\eta = 0$ highlighted (thick black curves).}\label{n3}
\end{center}
\end{figure} 

Finally, we construct the family of constant astigmatism surfaces having evolutes $\mathbf{r}^{[3]}|_{\lambda_1 = 1}$ and $\bar{\mathbf r}^{[2]} $. 
They are 
$$
\tilde{\mathbf{r}}^{[3]} = \bar{\mathbf r}^{[2]} -  \tilde{\mathbf{n}}^{[3]} \ln (k g^{[3]})|_{\lambda_1 = 1},
$$
$k$ being a constant. A picture of rather complicated surface can be seen in the right side of Figure~\ref{CA23}.

\section{Examples of multisoliton solutions with $\lambda = 1$}
Computing the limit for $\lambda_i \rightarrow 1$ allows us to iterate the B\"acklund transformation with $\lambda = 1$ and construct 
solutions of the sine-Gordon equation and corresponding pseudospherical surfaces. Slip-line fields are also available. 
However, taking the limit for $(\lambda_1,\lambda_{2}) \rightarrow (1,1)$ and $(c_1,c_2) \rightarrow (c,c)$ in \eqref{CA2sol} yields
$$
\lim x^{[2]} = \infty, \qquad \lim y^{[2]} = 0, \qquad \lim g^{[2]} = \infty.
$$
Fortunately, finite limits result when applying the scaling symmetry  
$$
\mathcal S_s (x,y,g) \rightarrow (sx, y/s, sg)
$$
of the CAE to the triple $(x^{[2]}, y^{[2]}, g^{[2]})$, the constant $s$ being suitably chosen.    
In what follows we iterate the B\"acklund transformation with $\lambda = 1$ (the case examined e.g. in \cite{S-B}) and we 
find the corresponding solutions of the CAE as the limits of the solutions obtained in previous section. 

Let us denote the $k$-th iteration of the B\"acklund transformation with 
 $\lambda = 1$ by $\omega^{\{k\}}$, according to the diagram 
\begin{displaymath}
    \xymatrix{ 
\omega^{\{0\}} \ar[r]^{1} 	& \omega^{\{1\}} \ar[r]^{1} & \omega^{\{2\}} \ar[r]^{1}  & \omega^{\{3\}} \ar[r]^{1} & \omega^{\{4\}} \ar[r]^{1} & \dots .
}
\end{displaymath}
The corresponding pseudospherical surface is, obviously, denoted by $\mathbf r^{\{k\}}$. We also extend the notation 
to the solutions of the constant astigmatism equation, i.e. $x^{\{k\}}, y^{\{k\}}, g^{\{k\}}$.

\subsection{One-soliton solutions}
Starting with the zero solution $\omega^{\{0\}} = 0$, the one-soliton solution of the sine-Gordon equation is 
$$
\omega^{\{1\}} = 2\arctan a = 
2\arctan \ex^{\xi + \eta + c}
$$
and the one-soliton pseudospherical surface, $\mathbf r^{\{1\}}$, is the pseudosphere \eqref{pseudosphere}.

The one-soliton solution of the CAE can be easily found by substituting $\lambda_1 = 1$ 
into Equations \eqref{CA1solxg} and \eqref{CA1soly}. 
Actually, since the solution does not depend on $\lambda_1$ (it disappears when eliminating parameters $\xi, \eta$ from 
\eqref{CA1solxg} and \eqref{CA1soly}), 
we obtain precisely the von~Lilienthal solution \eqref{vLsol}.

Corresponding constant astigmatism surfaces are exactly those given by \eqref{vLsurf}. 

\subsection{Two-soliton solutions}\label{2sollim}
Taking the limit of $\omega^{[2]}$ for $\lambda_i \rightarrow 1$ and $c_i \rightarrow c$ yields 
$$
\omega^{\{2\}} = 2\arctan\frac{2a\alpha}{1+a^2} = 
2\arctan\frac{2\ex^{\xi + \eta + c}(\xi - \eta)}{1+\ex^{2(\xi + \eta + c)}}.
$$
The corresponding two-soliton pseudospherical surface is
$$
\mathbf{r}^{\{2\}} =
\frac{1}{a^4+2(2 \alpha^2+1) a^2 + 1}
(
\begin{array}{c}
4a \alpha(1+a^2) \sin\alpha + 4 a (1+a^2) \cos\alpha
 \\
4 a (1+a^2) \sin\alpha  - 4 a \alpha (1+a^2) \cos\alpha
 \\
\beta (a^4+2(2 \alpha^2+1) a^2 + 1) - 2(a^4-1)
\end{array}
)
$$
and the slip-line field on the constant astigmatism surface's Gaussian sphere is
$$
\tilde{\mathbf{n}}^{\{2\}} = \mathbf{r}^{\{2\}} - \mathbf{r}^{\{1\}} = 
\frac{1}{a^6+(4 \alpha^2+3) (a^4+ a^2) + 1}
\\
\times 
(
\begin{array}{c}
4a(a^2+1)^2 \alpha\sin\alpha + 2a[a^4+ 2(1-2\alpha^2)a^2 + 1]\cos\alpha \\
2a[a^4+ 2(1-2\alpha^2)a^2 + 1] \sin\alpha - 4a(a^2+1)^2 \alpha\cos\alpha \\
(1-a^2)[a^4+2(1-2 \alpha^2) a^2 + 1]
\end{array}
), 
$$
see the right side of the Figure \ref{n2}. The picture indicates some symmetries of the pattern. Indeed, if
$c = 0$ and $\tilde{\mathbf{n}}^{\{2\}}(\xi,\eta) = (n_x, n_y, n_z)$, then
\begin{itemize}
\item $\tilde{\mathbf{n}}^{\{2\}}(-\xi,-\eta) = (n_x, -n_y, -n_z)$  
(symmetry with respect to the point $\tilde{\mathbf{n}}^{\{2\}}(0,0) = (1, 0, 0)$) ,
\item $\tilde{\mathbf{n}}^{\{2\}}(\eta,\xi) = (n_x, -n_y, n_z)$
(symmetry with respect to the zero meridian -- the image of the line $\xi = \eta$ ),
\item $\tilde{\mathbf{n}}^{\{2\}}(-\eta,-\xi) = (n_x, n_y, -n_z)$
(symmetry with respect to the equator -- the image of the line $\xi = -\eta$ ).
\end{itemize}
Coordinate lines of the slip line field $\tilde{\mathbf{n}}^{\{2\}}$ converge to the poles, namely 
$$
\lim_{\xi\rightarrow \pm \infty} \tilde{\mathbf{n}}^{\{2\}} = \lim_{\eta\rightarrow \pm \infty} \tilde{\mathbf{n}}^{\{2\}} = (0,0,\mp 1).
$$
According to the definition of slip lines, the corresponding orthogonal equiareal pattern has the same symmetries and the same
limit behavior. Unfortunately, no picture is provided; to do that one has to reparameterise the sphere which requires expressing 
$\xi, \eta$ in terms of $x^{\{2\}}, y^{\{2\}}$, i.e. inverting the transformation described by first two equations in \eqref{CA2sollim}. 
Plugging the functions $\xi(x,y), \eta(x,y)$ into $\tilde{\mathbf{n}}^{\{2\}}(\xi, \eta)$ 
then would provide the orthogonal equiareal net $\tilde{\mathbf{n}}^{\{2\}}(x,y)$.   

The two soliton solution $(x^{\{2\}}, y^{\{2\}}, g^{\{2\}}) $ of the CAE can be obtained by taking a limit for 
$\lambda_i \rightarrow 1$ and $c_i \rightarrow c$ of the 
triple $(x^{[2]}, y^{[2]}, g^{[2]}) $ scaled by the factor 
$s = -(\lambda_2 - \lambda_1)^2$, i.e.
$$
(x^{\{2\}}, y^{\{2\}}, g^{\{2\}}) = 
\mathop{\lim_{\lambda_i \rightarrow 1}}_{c_i \rightarrow c} \mathcal{S}_{-(\lambda_2 - \lambda_1)^2} (x^{[2]}, y^{[2]}, g^{[2]}).
$$
We have
$$\numbered\label{CA2sollim}
x^{\{2\}}  
= \frac{4 \ex^{2(\xi + \eta + c)}(\xi - \eta)}{\ex^{4(\xi + \eta + c)}+2[2(\xi-\eta)^2+1] \ex^{2(\xi + \eta + c)} +1}, \\
y^{\{2\}} 
= \frac{\ex^{2(\xi + \eta + c)} [\xi + \eta -(\xi - \eta)^2 + c]  + \xi + \eta + (\xi - \eta)^2 + c}{\ex^{2(\xi + \eta + c)}+1}, \\
g^{\{2\}} =  \frac{2 \ex^{\xi + \eta + c} (\ex^{2(\xi + \eta + c)}+1) }{\ex^{4(\xi + \eta + c)}+ 2[2 (\xi - \eta)^2+1] \ex^{2(\xi + \eta + c)} +1}.
$$
Then $z^{\{2\}} = 1/{g^{\{2\}}}^2$ as a function of $x^{\{2\}},y^{\{2\}}$ is a solution of the CAE.
Eliminating parameters $\xi,\eta$ 
provides the implicit formula for the function $z = z(x,y) = z^{\{2\}}(x^{\{2\}},y^{\{2\}})$, namely
$$
y = 
\frac{L^2-1}{L^2+1} 
(\frac{ x z}{ z x^2+1 })^2
- \ln{L} ,
\qquad
L = \frac{z x^2+1}{ 
\sqrt{ z  - (zx^2 + 1)^2  } 
- \sqrt{z}}.
$$
Finally, we write down the constant astigmatism surface $\tilde{\mathbf{r}}^{\{2\}}$ having evolutes $\mathbf{r}^{\{1\}} $ 
and $\mathbf{r}^{\{2\}}$.
For brevity, let us denote $\ln (k g^{\{2\}})$ by $f$. Then
$$
\tilde{\mathbf{r}}^{\{2\}} = \mathbf{r}^{\{1\}} - f\tilde{\mathbf{n}}^{\{2\}} 
= (
\begin{array}{c}
\tilde{R}_1\sin\alpha + \tilde{R}_2\cos\alpha \\
\tilde{R}_2\sin\alpha - \tilde{R}_1\cos\alpha \\
\beta + \tilde{R}_3
\end{array}
),
$$
where
$$
\tilde{R}_1 = -\frac{4 a(a^2+1) f \alpha }{a^4+2(2 \alpha^2+1) a^2 + 1}, \\
\tilde{R}_2 = -2a\frac{ (f-1) a^4-2[2 \alpha^2+(2 \alpha^2-1) f+1] a^2+f-1}
{a^6+(4 \alpha^2+3) (a^4+ a^2)+1}, \\
\tilde{R}_3 = \frac{(f-1) a^6 - [(4 \alpha^2-1) f+4 \alpha^2+1](a^4 - a^2)-f+1}
{a^6+(4 \alpha^2+3) (a^4+ a^2)+1}. 
$$
The results from this example coincide with ones obtainable using a 
\it reciprocal transformation \rm for the constant astigmatism equation introduced in \cite{H-M III}, 
albeit the parameterisation of the results is different. 
For instance, one can observe apparent similarity between the surface $\tilde{\mathbf{r}}^{\{2\}}$ 
in Figure \ref{CA2lim} and the surface \cite[Sect.~8, Ex.~5, Fig.~3]{H-M III}.  
\begin{figure}[ht] 
\begin{center} 
\includegraphics[scale=0.32]{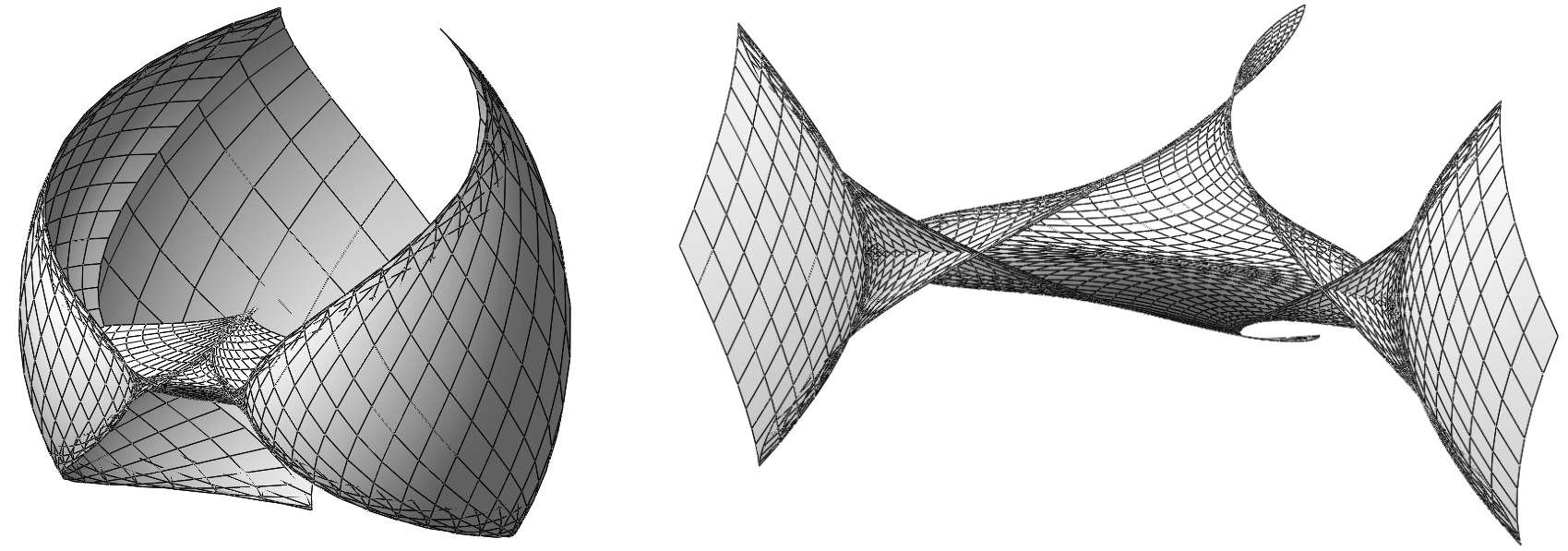}
\caption{Two soliton surface, $\tilde{\mathbf{r}}^{\{2\}}$, of constant astigmatism, $k = 1$, $c = 0$. 
The central part of the surface is zoomed in the right. }\label{CA2lim}
\end{center}
\end{figure}

\subsection{Three-soliton solutions}
The limit of $\omega^{[3]}$ for $\lambda_i \rightarrow 1$ and $c_i \rightarrow c$ yields
$$
\omega^{\{3\}} = 2\arctan\frac{a(2\alpha^2 + 2\beta+1) + a^3}{a^2(2\alpha^2 -2\beta + 1 ) + 1}
$$
or, alternatively \cite{S-B}, 
$$
\omega^{\{3\}} = - 2\arctan(\frac{1}{a}\cdot\frac  
{\beta+\frac12\sinh2(\beta+c)+\cosh^2(\beta+c)+\alpha^2}
{\beta+\frac12\sinh2(\beta+c)-\cosh^2(\beta+c)-\alpha^2}).
$$
The corresponding three-soliton pseudospherical surface is
$$
\mathbf{r}^{\{3\}} = \mathbf{r}^{\{2\}} 
+
(
\frac{\sin (\omega^{\{2\}} - \omega^{\{3\}})} {\sin 2 \omega^{\{2\}} } 
\mathbf{r}^{\{2\}}_\xi
+
\frac{\sin (\omega^{\{2\}} + \omega^{\{3\}})} {\sin 2 \omega^{\{2\}} } 
\mathbf{r}^{\{2\}}_\eta
)
$$
and the slip-line field on the constant astigmatism surface's Gaussian sphere is given by
$$
\tilde{\mathbf{n}}^{\{3\}} = \mathbf{r}^{\{3\}} - \mathbf{r}^{\{2\}},
$$
formulas being too lengthy to be written here, hence omitted. However, the picture can be seen in the right side of Figure~\ref{n3} and 
the properties of the pattern (symmetries when $c = 0$ and limits of the coordinate lines for $\xi,\eta \rightarrow \pm \infty$)
are exactly same as in the two soliton case from the previous subsection \ref{2sollim}.

In order to obtain the three soliton solution $(x^{\{3\}}, y^{\{3\}}, g^{\{3\}})$ 
we rescale the triple $(x^{[3]}, y^{[3]}, g^{[3]})$ by the scaling factor $s = (\lambda_2 - \lambda_1)^2(\lambda_3 - \lambda_1)^2$
and then compute the limit for $\lambda_i \rightarrow 1$ and $c_i \rightarrow c$, i.e.
$$
(x^{\{3\}}, y^{\{3\}}, g^{\{3\}}) = 
\mathop{\lim_{\lambda_i \rightarrow 1}}_{c_i \rightarrow c} \mathcal{S}_{(\lambda_2 - \lambda_1)^2(\lambda_3 - \lambda_1)^2} 
(x^{[3]}, y^{[3]}, g^{[3]}).
$$
The result is 
$$
x^{\{3\}} = 
  \frac{-4 a^2 [(\alpha^2-\beta) a^2-\alpha^2-\beta]} 
{a^6 + [4 (\alpha^2-\beta)^2 + 8 \alpha^2 + 3] a^4 + [4 (\alpha^2+\beta)^2 + 8 \alpha^2 + 3] a^2 + 1}, \\
y^{\{3\}} = \frac{2 \alpha}{3} \cdot 
\frac{ (\alpha^2-3 \beta+\frac32) a^4 - 2( \alpha^4-\alpha^2+3\beta^2 -\frac32) a^2+\alpha^2+3 \beta+\frac32}
{ a^4+2(2 \alpha^2+1) a^2 + 1}, \\
g^{\{3\}} =
\frac{2a[a^4+2(2 \alpha^2+1) a^2 + 1]}{a^6 + [4 (\alpha^2-\beta)^2 + 8 \alpha^2 + 3] a^4 + [4 (\alpha^2+\beta)^2 + 8 \alpha^2 + 3] a^2 + 1}.
$$

Finally, the constant astigmatism surface $\tilde{\mathbf{r}}^{\{3\}}$, part of which can be seen in Figure~\ref{CA3lim},
is
$$
\tilde{\mathbf{r}}^{\{3\}} = \mathbf{r}^{\{2\}} - \ln(k g^{\{3\}})\tilde{\mathbf{n}}^{\{3\}},
$$
$k$ being a real constant.
\begin{figure}[ht] 
\begin{center} 
\includegraphics[scale=0.38]{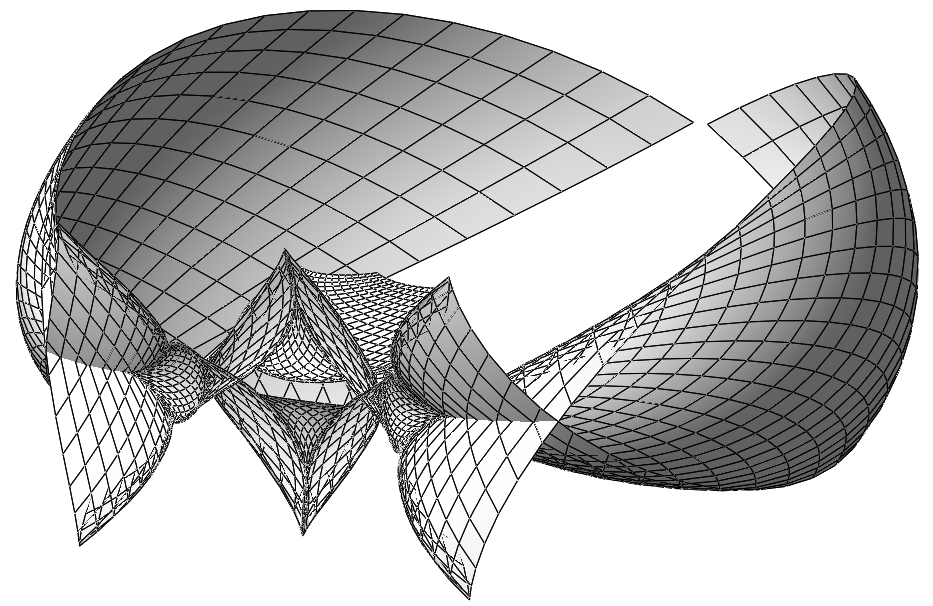}
\caption{Three soliton surface, $\tilde{\mathbf{r}}^{\{3\}}$, of constant astigmatism, $k = 1$, $c = 0$.}\label{CA3lim}
\end{center}
\end{figure}

\ack

The author was supported by Specific Research grant SGS/1/2013 of the Silesian University in Opava and
wishes to thank Michal Marvan for his guidance and valuable advice.
The author is indebted to Iosif S.~Krasil'shchik for reading the manuscript and
would also like to thank Petr Blaschke for useful discussions.

\section {References}

\end{document}